\documentclass[lettersize,journal]{IEEEtran}
\usepackage{amsmath,amsfonts}
\usepackage{algorithmic}
\usepackage{cite}
\usepackage{array}
\usepackage[caption=false,font=normalsize,labelfont=sf,textfont=sf]{subfig}
\usepackage{textcomp}
\usepackage{stfloats}
\usepackage{url}
\usepackage{verbatim}
\usepackage{mwe}
\usepackage{algorithm}
\usepackage{amssymb}
\usepackage{stfloats}
\usepackage{cuted}
\usepackage{booktabs}
\usepackage{makecell}
\usepackage{amssymb}
\usepackage{graphicx}
\usepackage{subfig}

\usepackage{amsthm}

\usepackage{caption}
\newtheorem{thm}{\bf Theorem}
\newtheorem{rmk}{\bf Remark}
\DeclareMathOperator*{\argmin}{arg\,min}

\def\BibTeX{{\rm B\kern-.05em{\sc i\kern-.025em b}\kern-.08em
    T\kern-.1667em\lower.7ex\hbox{E}\kern-.125emX}}
\usepackage{balance}
\begin{document}
\title{A Low-Complexity Joint Fractional Delay and Doppler Frequency Estimator for AFDM-Enabled Vehicular LEO-ICAN Systems}
\author{Zhenyu Chen, Ke Xiao, Xiaomei Tang, Jing Lei, Muzi Yuan and Guangfu Sun
\thanks{This work is supported by the National Natural Science Foundation of China (Grant No. 62531025). (\textit{Corresponding author: Xiaomei Tang})
	
Z. Chen, K. Xiao, X. Tang, J. Lei, M. Yuan and G. Sun are with the College of Electronic Science and Technology, National University of Defense Technology, Changsha 410073, China. email: (chenzhenyu20, xiaoke3\underline{ }3, tangxiaomei, leijing, ymz)@nudt.edu.cn, gfsunmail@163.com.

The authors are grateful to Prof. Yanqun Tang, Dr. Yu Zhou and Haoran Yin from Sun Yat-sen University.}}

%\markboth{Journal of \LaTeX\ Class Files,~Vol.~18, No.~9, September~2020}%
%{How to Use the IEEEtran \LaTeX \ Templates}

\maketitle

\begin{abstract}
Low-Earth-orbit (LEO) satellites and vehicle-to-everything (V2X) networks are driving integrated communication and navigation (ICAN) toward next-generation intelligent transportation. Affine frequency division multiplexing (AFDM) is a promising waveform for high-mobility LEO scenarios owing to its Doppler robustness, simple modulation, and low pilot overhead. However, applying existing high-accuracy AFDM fractional delay-Doppler estimators to LEO-ICAN entails substantial search or inference complexity, while the spectrum-wrapping-induced envelope structure in line-of-sight (LOS)-dominated channels remains underexploited. This paper analyzes and exploits the spectrum-wrapping-induced envelope structure of the fractional AFDM response, and proposes a low-complexity joint estimator that combines minimum-entropy fractional Doppler estimation with closed-form fractional delay estimation. Simulation results show that the proposed estimator approaches the root Cram\'er--Rao lower bound (RCRLB) and achieves root-mean-square error (RMSE) performance comparable to that of matched filtering (MF), matched filtering with generalized Fibonacci search (MF-GFS), and off-grid sparse Bayesian learning (OG-SBL), while requiring substantially lower computational complexity and runtime. This favorable accuracy-complexity profile highlights the potential of the proposed estimator for real-time ICAN processing in high-mobility LEO-assisted vehicular networks.
\end{abstract}

\begin{IEEEkeywords}
AFDM, fractional channel estimation, LEO satellites, integrated communication and navigation
\end{IEEEkeywords}

\section{Introduction}
\IEEEPARstart{T}{raditional} communication and navigation systems operate in isolated modes, leading to redundant infrastructure and fragmented data resources. Integrated communication and navigation (ICAN) overcomes these limitations through resource sharing and information fusion, which is essential for connected and autonomous vehicles that simultaneously demand lane-level positioning and high-throughput data transmission~\cite{11016809,10726910}.

\begin{figure}[htp]
	\centering
	\includegraphics[width=3.5in]{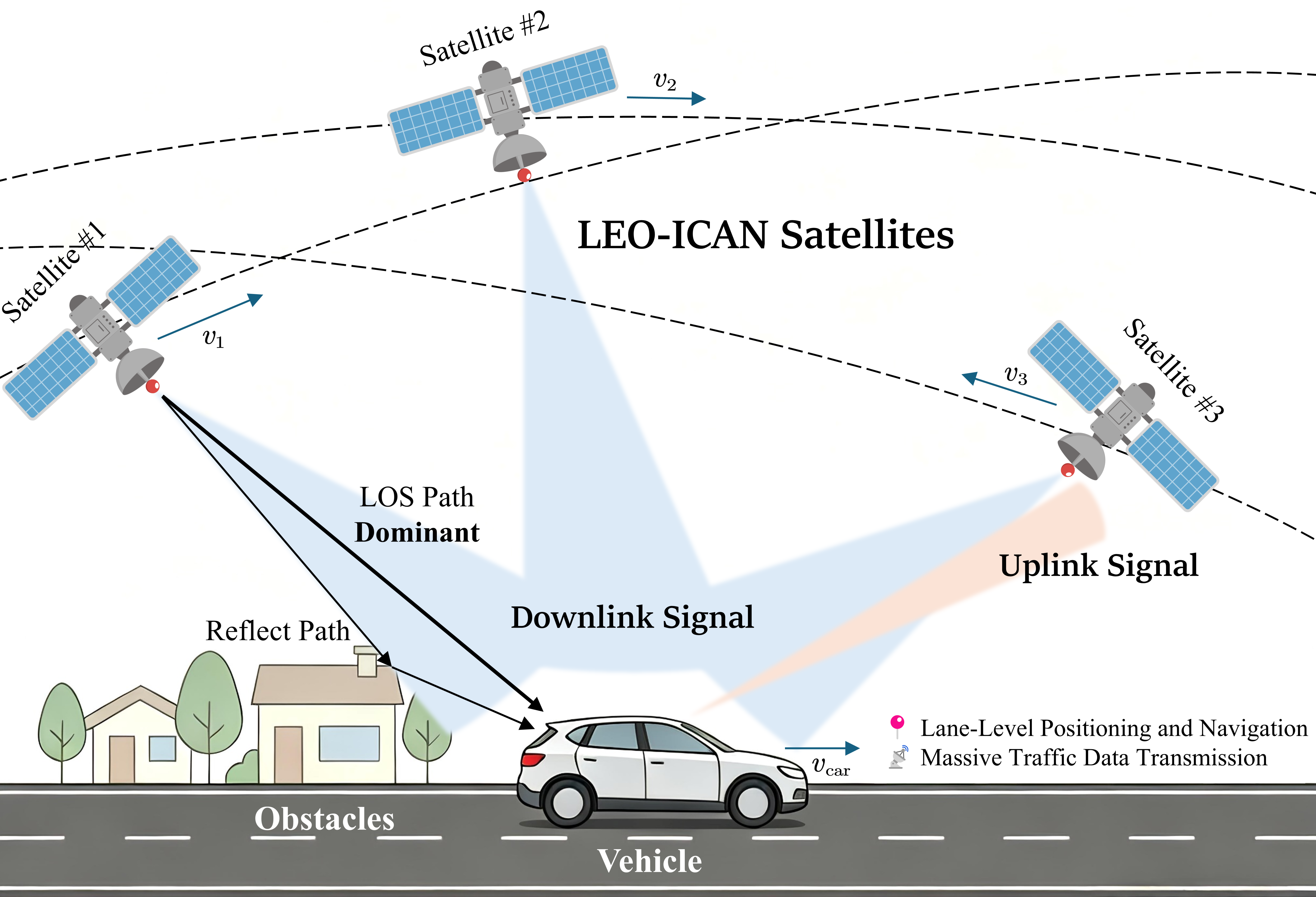}
	\caption{Vehicular LEO-ICAN system scenario, enabling simultaneous lane-level positioning and massive data transmission for autonomous vehicles.}
	\label{scene}
\end{figure}

As next-generation mobile networks advance toward air-space-ground integration, low-Earth-orbit (LEO) satellites have become an ideal ICAN platform for ubiquitous coverage. However, the severe Doppler effect in high-mobility LEO links poses a major challenge to conventional orthogonal frequency division multiplexing (OFDM)~\cite{hadaniOrthogonalTimeFrequency2017a}. Affine frequency division multiplexing (AFDM), built upon the discrete affine Fourier transform (DAFT), has therefore emerged as a promising alternative. Compared with orthogonal time frequency space (OTFS), AFDM offers a simpler modulation structure and lower pilot overhead, making it a promising waveform for LEO-ICAN systems~\cite{bemaniAffineFrequencyDivision2023a,bemaniAffineFrequencyDivision2023}.

Channel estimation is indispensable in ICAN signal processing: accurate delay and Doppler estimation enables both high-precision vehicle positioning using signals from three or more LEO satellites and reliable data transmission in high-mobility links, as illustrated in Fig.~\ref{scene}. Existing fractional delay--Doppler estimators for AFDM mainly fall into two categories: matched-filtering (MF) schemes with high accuracy but heavy complexity, primarily improved via fast-search strategies~\cite{bemaniIntegratedSensingCommunications2024,liMatchedFilteringbasedChannel2026a,zhuLowComplexityAFDMbased2026,luoNovelAngledelaydopplerEstimation2025b}; and sparse Bayesian learning (SBL) schemes that excel in sparse multipath channels~\cite{yangAFDMOffgridChannel2024,liLowcomplexityChannelEstimation2026}. Although fractional delay--Doppler channel modeling for AFDM has been established~\cite{zhuLowComplexityAFDMbased2026}, directly applying existing high-accuracy estimators to LEO-ICAN entails substantial search or inference complexity, with the spectrum-wrapping-induced envelope structure under LOS-dominated channels remaining underexploited for scenario-tailored complexity reduction.

Motivated by these gaps, this paper makes two contributions: \emph{First}, we analyze the envelope of the fractional-delay AFDM input--output relationship induced by spectrum wrapping, and provide an approximate closed-form expression for the nonzero responses after fractional Doppler compensation. \emph{Second}, targeting the line-of-sight (LOS)-dominated LEO satellite-ground links, we propose an entropy-driven Doppler estimation-compensation scheme combined with a closed-form delay estimator, achieving markedly lower complexity with comparable accuracy to baselines.

\section{AFDM Basics}
Fig.~\ref{diagram} illustrates the AFDM-enabled ICAN architecture, where the inverse DAFT (IDAFT) generates discrete-time AFDM samples, which are first appended with a chirp periodic prefix (CPP) and then converted into a continuous-time waveform through digital-to-analog conversion (DAC). The resulting waveform exhibits the characteristic time--frequency spectrum wrapping behavior~\cite{yinAmbiguityFunctionAnalysis2025b}. At the receiver, the estimated channel parameters are used for both data detection and navigation acquisition/tracking.

\begin{figure}[htp]
	\centering
	\includegraphics[width=3.5in]{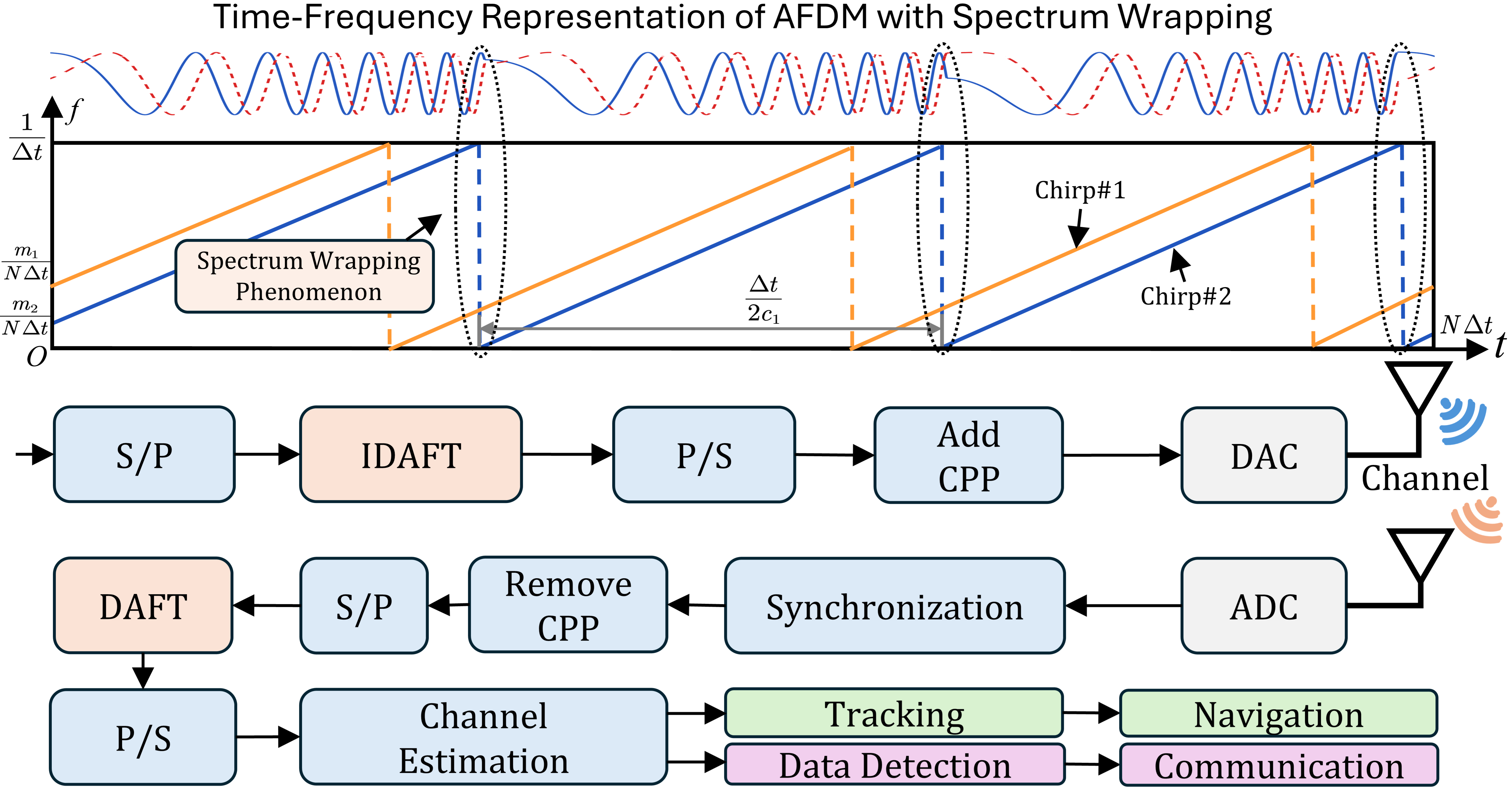}
	\caption{Block diagram of AFDM-enabled ICAN system with time--frequency representation}
	\label{diagram}
\end{figure}

AFDM modulation is expressed as \cite{bemaniAffineFrequencyDivision2023}:
\begin{equation}
	\label{afdm_mod}
	s[n]=\sum_{m^{\prime}=0}^{N-1}{x[m^{\prime}]\phi _n(m^{\prime}),\quad n}=0,\cdots ,N-1,
\end{equation}
where $N$ is the number of subcarriers, and $ \phi _n\left( m^{\prime} \right) =\frac{1}{\sqrt{N}} e^{\imath2\pi \left( c_1n^2+c_2m^{\prime 2}+nm^{\prime}/N \right)} $ denotes the AFDM subcarrier basis function, with $m^{\prime}$ being the index of the transmitter subcarriers. To achieve full diversity gain, the chirp parameter $c_1$ is set as $c_1 = (2 k_{\text{max}}+1)/2N= C/2N$, where $k_{\text{max}}$ is the maximum normalized Doppler frequency, and $C$ is the total number of chirp segments. The parameter $c_2$ can be any irrational number, or a rational number far smaller than $1/(2N)$.

To mitigate multipath fading, a CPP is appended to the time-domain AFDM signal during modulation, defined as:
\begin{equation}
	\label{CPP}
	s[n]=s[N+n]e^{-\imath 2\pi c_1(N^2+2Nn)},\quad n=-n_{\mathrm{cp}},\cdots ,-1.
\end{equation}

Owing to the inherent spectrum wrapping feature of AFDM, phase compensation is required after DAC to ensure the waveform conforms to its standard time--frequency representation. The corresponding continuous-time AFDM signal is written as\cite{bemaniAffineFrequencyDivision2023}:

\begin{equation}
	\label{continous_time_signal}
	\sqrt{\Delta t}s(t)
	=
	\sum_{m^{\prime}=0}^{N-1}
	x[m^{\prime}]
	e^{\imath 2\pi \Psi_{m',q}(t)},
	\quad
	t_{m^{\prime},q}\le t<t_{m^{\prime},q+1}.
\end{equation}
where $	\Psi_{m',q}(t)
= c_2 m^{\prime 2}
+ c_{1}^{\prime} t^2
+ \frac{m^{\prime}}{T}t
+ \frac{q}{2c_1}\left(q-\frac{m^{\prime}}{N}\right)
- \frac{q}{\Delta t}t
+ \varphi_{m^{\prime},q}$, $\Delta t$ is the sampling interval, $c_{1}^{\prime}=c_1/\Delta t^2$ is the continuous-time chirp rate, $q=1,2,\dots,C$ is the chirp segment index, and the constant phase term $\varphi _{m^{\prime},q}$ is defined as $\varphi _{m^{\prime},q}=\lfloor \frac{q}{2c_1}(q-\frac{m^{\prime}}{N}) \rfloor -\frac{q}{2c_1}(q-\frac{m^{\prime}}{N})$ to ensure the sampled continuous-time signal can be recovered to the discrete form in (\ref{afdm_mod}). The piecewise time boundary $t_{m^{\prime},q}$ is given by:
\begin{equation}
	\label{t}
	\begin{split}
		t_{m^{\prime},q}=\begin{cases}
			0,&		q=0\\
			&		\\
			\frac{(N-m^{\prime})}{2Nc_1}\Delta t+\frac{q-1}{2c_1}\Delta t,&		1\le q\le C\\
		\end{cases}.
	\end{split}
\end{equation}

After traversing a time--frequency doubly selective channel, the received continuous-time signal is given by:
\begin{equation}
	\label{doubly selective channel}
	r(t)=\iint{h(\tau ,\nu )s(t}-\tau )e^{\imath 2\pi \nu t}\mathrm{d}\tau \mathrm{d}\nu +w\left( t \right),
\end{equation}
where $w(t)$ is additive complex white Gaussian noise following $\mathcal{C}\mathcal{N}(0,\sigma^2)$. In ICAN scenarios, the line-of-sight (LOS) component dominates the channel \cite{weiTimeofarrivalEstimationIntegrated2023a}, so the channel impulse response (CIR) is simplified to $h(\tau,\nu) = h\delta(\nu-\nu_0)\delta(\tau-\tau_0)$, where $h$, $\nu_0$, and $\tau_0$ denote the channel gain, Doppler frequency, and propagation delay, respectively.

\begin{figure*}[htb] 
	\centering

	\begin{equation}	  
		\label{received_signal}  
		r\left[ n \right] =\frac{1}{\sqrt{N}}\sum_{m^{\prime}=0}^{N-1}{hx[m^{\prime}]e^{\imath 2\pi (c_1(n-(l+\iota ))^2+c_2m^{\prime 2}+\frac{n-(l+\iota ))m^{\prime}}{N}-\frac{k+\kappa }{N}n)}e^{\imath 2\pi (\sum_{q=0}^C{q\iota I_{\mathcal{L} _{m^{\prime},q}}((n}-(l+\iota))_N))}}+w[n]
	\end{equation}
		\hrulefill
\end{figure*}

We first define $L = l + \iota = \tau_0/\Delta t$ and $K = k + \kappa = \nu_0/\Delta f$, where $\Delta f$ is the subcarrier spacing, $l$ and $k$ are the integer parts of the normalized delay and Doppler frequency, respectively, and $\iota$ and $\kappa$ are the corresponding fractional parts within (-0.5,0.5). After analog-to-digital conversion (ADC), the discrete received signal is given by (\ref{received_signal}) at the bottom of the next page, where $(\cdot)_N$ denotes the modulo-$N$ operation, and the indicator function $I_{\mathcal{L}_{m^{\prime},q}}(n)$ is defined as
\begin{equation}
	\label{indicator_fun}
	\begin{split}
		I_{\mathcal{L}_{m^{\prime},q}}(n) = 
		\begin{cases} 
			1 & n \in \mathcal{L}_{m^{\prime},q} \triangleq \{n_{m^{\prime},q}+1, \cdots, n_{m^\prime,q+1}\}, \\
			0 & n \notin \mathcal{L}_{m^{\prime},q}. 
		\end{cases}
	\end{split}
\end{equation}
where $ n_{m^{\prime},q}\triangleq \lfloor \frac{t_{m^\prime,q}}{\varDelta t} \rfloor $.

AFDM demodulation is implemented via DAFT on the received discrete signal, expressed as
\begin{equation}
	\label{y_m}
	y[m] =\sum_{n=0}^{N-1}{r[n]   \phi _{n}^{*}\left( m \right)} ,
\end{equation}
where $(\cdot)^*$ denotes the complex conjugate operation, $m$ denotes the index of the subcarrier at the receiver and the noise term $\tilde{w}[m]$ still follows the complex Gaussian distribution $\mathcal{C}\mathcal{N}(0,\sigma^2)$.

\section{Joint Delay and Doppler Frequency Estimator}
This section first derives the input--output relationship of AFDM, and on this basis, presents a joint fractional delay and Doppler frequency estimation algorithm.
\subsection{Input--output Relationship of AFDM}
Equation (\ref{y_m}) can be expressed in matrix form as follows:
\begin{equation}
	\label{matrix_form_ym}
	\mathbf{y}=\mathbf{H}_{\mathrm{eff}}\mathbf{x}+\mathbf{w},
\end{equation}
where $\mathbf{H}_{\mathrm{eff}}$ is the effective channel matrix describing the channel properties in the DAFT domain.
\begin{figure*}[htb] 
	\centering
	\begin{equation}
		\label{H_p}
		\begin{split}
			\mathbf{H}_{\text{eff}}[m,m^{\prime}]=\frac{1}{N} h e^{\imath 2\pi (c_1(l+\iota )^2-c_2(m^2-m^{\prime 2})-(l+\iota )m^{\prime}/N)}\sum_{n=0}^{N-1}{e^{\imath \frac{2\pi}{N}(m^{\prime}-(m+l^{\mathrm{eq}}))n}e^{\imath 2\pi \iota (\sum_{q=0}^C{qI_{\mathcal{L} _{m^{\prime},q}}((n}-(l+\iota ))_N))}}
		\end{split}
	\end{equation}
		\hrulefill
\end{figure*}
The element in the $m$-th row and \(m^{\prime}\)-th column of \(\mathbf{H}_{\text{eff}}\) is expressed as (\ref{H_p}) above, where $l^{\text{eq}} \triangleq (K+CL)_N$ is the equivalent delay.

The summation term is the core part that needs to be analyzed, so it is separately defined as \(\mathcal{F}(m,m^{\prime})\):
\begin{equation}
	\label{F_p}
	\mathcal{F} (m,m^{\prime})=\sum_{n=0}^{N-1}{e^{\imath \frac{2\pi}{N}(m^{\prime}-(m+l^{\mathrm{eq}}))n}e^{\imath 2\pi \iota (\sum\limits_{q=0}^{C}{qI_{\mathcal{L} _{m^{\prime},q}}((n}-(l+\iota))_N))}}.
\end{equation}

Because of the spectrum wrapping phenomenon, the expression of \(\mathcal{F}(m, m')\) can be regarded as the superposition of several piecewise Fourier bases related to $q$. When $N$ is sufficiently large, the magnitude of \(\mathcal{F}(m,m')\) exhibits coherent characteristics, as demonstrated in Theorem 1:
\begin{thm}\label{thm1}
	When $N\gg \max\{C,l_{\max},M_C(m')\}$ and $|\Delta|\ll N$ in the main-lobe region, the magnitude of $\mathcal{F}(m,m')$ can be approximated as
	\begin{equation}
		\label{magnitude}
		\left|\mathcal{F}(m,m')\right|
		\approx
		\frac{N}{C}\left|\frac{\sin\!\left(\pi(C\iota+\Delta)\right)}{\sin\!\left(\tfrac{\pi}{C}(C\iota+\Delta)\right)}\right|
		\times\left|\mathrm{sinc}\!\left(\dfrac{\Delta}{C}\right)\right|,
	\end{equation}
	where $\Delta=m'-(m+l^{\mathrm{eq}})$, $\mathrm{sinc}(x)=\sin(\pi x)/(\pi x)$, $M_C(m')=\max\{-\lfloor\iota-m'/C\rfloor,0\}$.
\end{thm}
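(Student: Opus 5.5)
The plan is to treat the indicator sum in (\ref{F_p}) as a piecewise-constant phase. The interval $n=0,\dots,N-1$ splits into about $C$ contiguous segments, and on each the exponent $\iota\sum_q qI_{\mathcal{L}_{m',q}}((n-(l+\iota))_N)$ is a constant $q\iota$. By (\ref{t}), the boundaries $n_{m',q}$ are spaced by $\lfloor N/C\rfloor$ up to rounding, since $1/(2c_1)\cdot\Delta t/\Delta t=N/C$. The first segment has length $(N-m')/C$ and is offset by the cyclic shift $l+\iota$. So I would write
\begin{equation*}
\mathcal{F}(m,m')=\sum_{q}e^{\imath2\pi q\iota}\sum_{n\in\mathcal{S}_q}e^{\imath\frac{2\pi}{N}\Delta n},
\end{equation*}
where $\mathcal{S}_q$ is the (cyclically shifted) set of samples mapped to segment $q$. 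I would first neglect the rounding, the partial first and last segments, and the shift by $l$. Together these contribute $O(C+l_{\max}+M_C(m'))$ boundary samples, which are negligible against $N$ under the hypothesis $N\gg\max\{C,l_{\max},M_C(m')\}$. The quantity $M_C(m')$ is exactly what accounts for the segment index offset in the first segment. After this step every segment is treated as having length $N/C$ and starting at $n_q\approx qN/C+n_0$.

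Next, each inner sum is a geometric series:
\begin{equation*}
\sum_{n=n_q}^{n_q+N/C-1}e^{\imath\frac{2\pi}{N}\Delta n}=e^{\imath\frac{2\pi}{N}\Delta n_q}\,\frac{1-e^{\imath2\pi\Delta/C}}{1-e^{\imath2\pi\Delta/N}}.
\end{equation*}
Since $|\Delta|\ll N$, the denominator satisfies $|1-e^{\imath2\pi\Delta/N}|\approx2\pi|\Delta|/N$. The magnitude of the inner sum is therefore $\approx\frac{N}{C}|\mathrm{sinc}(\Delta/C)|$, which is the same for every $q$. Substituting $n_q=qN/C$, the phase $e^{\imath2\pi\Delta q/C}$ combines with $e^{\imath2\pi q\iota}$ into $e^{\imath\frac{2\pi}{C}(C\iota+\Delta)q}$. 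The outer sum is then a $C$-term geometric series in $q$, whose magnitude is $|\sin(\pi(C\iota+\Delta))/\sin(\frac{\pi}{C}(C\iota+\Delta))|$. Multiplying the two factors, with the $1/C$ normalization carried along, gives (\ref{magnitude}).

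The main obstacle is the bookkeeping of the modulo-$N$ shift and of the segment boundaries. The shift $(n-(l+\iota))_N$ cyclically rotates the segments, so one segment is split across $n=0$ and $n=N-1$. The rounding $\lfloor t_{m',q}/\Delta t\rfloor$ also makes the segment lengths differ by one sample and introduces the $m'$-dependent first-segment length. My first approach would be to re-index with $n'=(n-l)_N$. This rotation only multiplies the whole sum by a unimodular phase $e^{\imath2\pi\Delta l/N}$, so it leaves $|\mathcal{F}|$ unchanged, apart from the wrapped segment, whose two pieces carry phases $q\iota$ differing by $C\iota$. I would then bound the error introduced by each boundary defect by its number of samples, which is $O(1)$ per segment and $O(M_C(m'))$ in total. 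This yields a total relative error of $O((C+l_{\max}+M_C(m'))/N)$, which is justified by the stated hypothesis. This last step is the one I would treat only at the level of an asymptotic approximation, consistent with the ``$\approx$'' in the statement.
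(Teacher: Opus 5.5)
Your route is the paper's. You decompose $\mathcal{F}(m,m')$ into $C$ constant-phase blocks and absorb three errors: $\mathcal{O}(C)$ from boundary rounding, $\mathcal{O}(l_{\max})$ from the cyclic shift, and $\mathcal{O}(M_C(m'))$ from merging the $q=0$ and $q=C$ edge pieces. You then evaluate the inner and outer geometric series and linearize $\sin(\pi\Delta/N)$, and those computations are correct.

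The one step that fails as written is your justification of the re-indexing $n'=(n-l)_N$. You claim it only multiplies the sum by the global phase $e^{\imath 2\pi\Delta l/N}$, so that the only defect is the indicator-phase mismatch of size $C\iota$. That would require $e^{\imath\frac{2\pi}{N}\Delta n}$ to be $N$-periodic in $n$, i.e.\ $\Delta\in\mathbb{Z}$. But $l^{\mathrm{eq}}=(K+CL)_N$ carries the fractional part of $\kappa+C\iota$. So $\Delta$ is generically non-integer, and $e^{\imath 2\pi\Delta}=e^{-\imath 2\pi(\kappa+C\iota)}\neq 1$. Hence the samples that cross the wrap point $n=N-1\to 0$ (at most $l+1$ of them) acquire an extra factor $e^{\pm\imath 2\pi(\kappa+C\iota)}$ relative to their unfolded positions. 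The rotation is therefore not magnitude-preserving.

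The repair is exactly the paper's Step 2. The number of folded samples is at most $l+1\le l_{\max}+1$, independent of $m'$ and $\iota$. Each term changes by at most $|1-e^{\imath 2\pi(\kappa+C\iota)}|\le 2$, giving an absolute error $\mathcal{O}(l_{\max})$, i.e.\ relative error $\mathcal{O}(l_{\max}/N)$. This phase jump of the Fourier kernel is the actual mechanism behind the $l_{\max}$ in the hypothesis.

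Your first paragraph does budget $\mathcal{O}(l_{\max})$ for ``the shift by $l$'', so your final tally $\mathcal{O}((C+l_{\max}+M_C(m'))/N)$ is right. However, in your last paragraph nothing produces that term once the rotation is declared exact. Replace the global-phase claim with the folded-sample bound, and your argument coincides with the paper's proof.
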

\begin{proof}[Proof of Theorem \ref{thm1}]
	See Appendix A.
\end{proof}
\begin{rmk}\label{rmk1}
	As shown in (\ref{magnitude}), the magnitude $\left|\mathcal{F}(m,m')\right|$ is composed of two distinct components. The first component, $\varUpsilon(m,m')=\left|\frac{\sin(\pi(C\iota+\Delta))}{\sin(\frac{\pi}{C}(C\iota+\Delta))}\right|$, is a Dirichlet kernel with respect to $m$ that exhibits a periodicity of $C$. The second component, $\varTheta(m,m')=\left|\mathrm{sinc}(\frac{\Delta}{C})\right|$, acts as a sinc-shaped envelope with respect to $m$.
	
	These properties are illustrated in Fig.~\ref{fig3}. The figure shows that $\Upsilon(m,m')$ depends exclusively on the fractional Doppler shift $\kappa$, since the fractional delay $\iota$ is eliminated from the Dirichlet kernel variables through its cancellation with the corresponding terms in $C\iota$ and $l^{\mathrm{eq}}$, leading to energy leakage. By contrast, $\Theta(m,m')$ is governed by both $\kappa$ and $\iota$, whose combined effect shifts the peak position. Therefore, $\kappa$ should be estimated and compensated before the estimation of $\iota$. Note that $M_C(m')\ll N$ does not hold for arbitrary $m'$ (worst case $M_C(m')\sim N/C$), but fortunately, $m'$ is a controllable pilot parameter, and setting $m'=0$ yields $M_C(0)\le 1$, thereby supporting the approximation in~(\ref{magnitude}).
\end{rmk}

\begin{figure}[!h]
	\centering
	\includegraphics[width=3in]{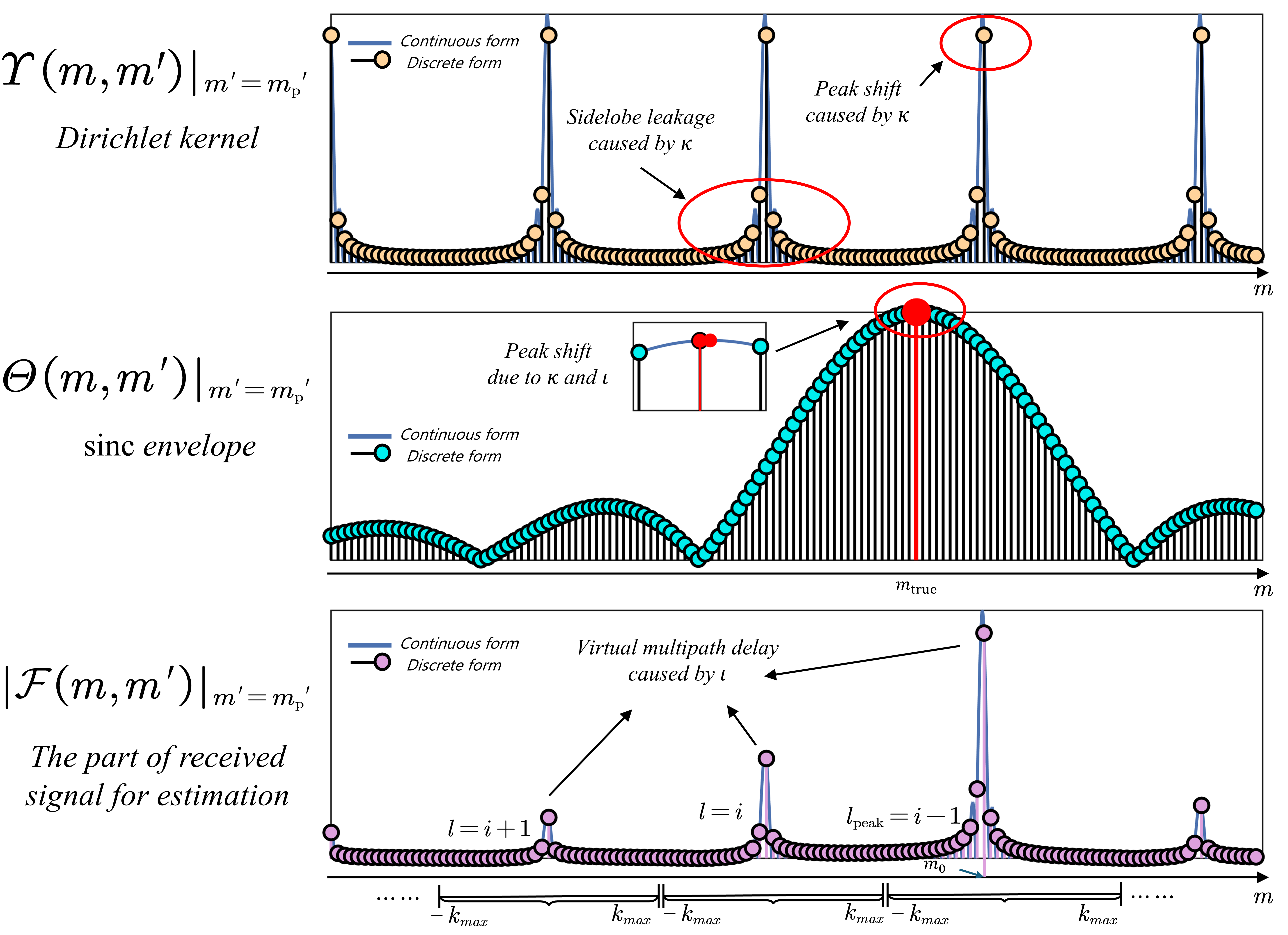}
	\caption{Schematic diagram of \(\Upsilon(m, m')\), \(\Theta(m, m')\), and \(|\mathcal{F}(m, m')|\) under $m$ index (with $m^{\prime}=0$).}
	\label{fig3}
	\vspace{-10pt}
\end{figure}

\subsection{Estimation Method}
The joint fractional delay and Doppler frequency estimator relies on the embedded single-pilot-aided (SPA) scheme [12], where the estimation portion of the received symbol \(y[m]\) corresponds to \(\mathbf{H}_{\rm eff}[m,m']|_{m'=m_{\rm p}'}\), and \(m_{\rm p}'\) denotes the pilot location in \(x[m']\).
The pilot symbol is placed at the first position of \(x[m']\), with \(Q\) guard symbols reserved on both sides in the circular index sense:
\begin{equation}	
	\label{xn}  
	x\left[ m^{\prime} \right] =\begin{cases}
		x_{\text{pilot}}, m^{\prime}=0\\
		0,     1\leqslant m^{\prime}\leqslant Q \ \text{or}\  N-Q+1\leqslant m^{\prime}\leqslant N-1\\
		x_{\text{data}},  Q+1\leqslant m^{\prime}\leqslant N-Q\\
	\end{cases},
\end{equation}
where \(Q = 2 l_{\text{max }} k_{\text{max }}+2 k_{\text{max }}+l_{\text{max }}\) and $l_{\text{max }}$ is the maximum normalized delay. The integer parts $\hat{l}$ and $\hat{k}$ can be  obtained by SPA, so we only consider the fractional parts.
\subsubsection{Minimum-Entropy-Based Fractional Doppler Frequency Search and Compensation}

As stated by Remark \ref{rmk1}, fractional Doppler shift estimation should be prioritized and performed first. For LOS channels, a reasonable solution is to compensate the received signal for Doppler frequency. From (6), the compensated received signal is given by:
\begin{equation}	
	\label{doppler_compensate}  
	r_{\kappa_{\text{loc}}} [n]  =r[n]  e^{\imath 2\pi \frac{\kappa _{\text{loc}}}{N}n},
\end{equation}
where $\kappa_{\text{loc}}$ denotes the local compensation frequency. After compensation, the normalized Doppler frequency is \(K = k + \kappa - \kappa_{\text{loc}}\). When \(\kappa_{\text{loc} }=\kappa\), \(K\) reduces to its integer part \(k\), which means that no fractional-Doppler-induced energy leakage remains in \(\Upsilon(m,m')\). In this scenario, the compensated received signal $y_{\kappa_{\text{loc}}}[m]$ after DAFT achieves its sparsest representation, and the minimum-entropy principle can thus be adopted to select the proper Doppler compensation \cite{coifmanEntropybasedAlgorithmsBest1992}. To facilitate the subsequent entropy calculation, we define the energy-based probability mass function in terms of the normalized discrete signal energy, given by:
\begin{equation}	
	\label{energy of noise}  
		p_{\kappa_{\text{loc}}}(m) = \frac{|y_{\kappa_{\text{loc}}}(m)|^2}{\sum_{m=0}^{N-1} |y_{\kappa_{\text{loc}}}(m)|^2}, \quad m = 0, 1, \ldots, N-1,
\end{equation}

The estimated fractional Doppler shift $\hat{\kappa}$ is obtained as:
\begin{equation}	
	\label{est_kappa}  
 \hat{\kappa} = \argmin_{\kappa_{\text{loc}} \in (0, 1)} H(\kappa_{\text{loc}})=\argmin_{\kappa_{\text{loc}} \in (0, 1)} \left( -\sum_{m=0}^{N-1} p_{\kappa_{\text{loc}}}(m) \ln p_{\kappa_{\text{loc}}}(m) \right).
\end{equation}

\begin{figure}[!h]
	\centering
	\includegraphics[width=3.5in]{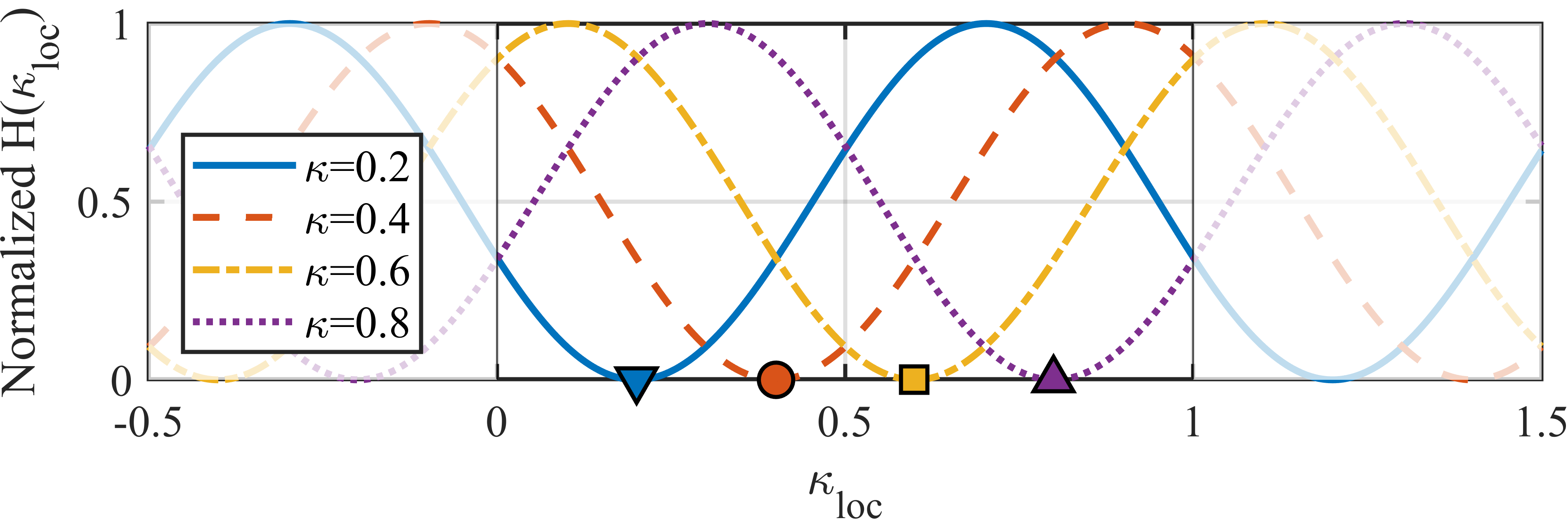}
	\caption{Illustration of local entropy valleys under modulo-1 arithmetic for \(\kappa\in\{0.2,0.4,0.6,0.8\}\).}
	\label{entropy}
	\vspace{-10pt}
\end{figure}

To accelerate the solution of (16), we adopt a two-stage search strategy. 
A coarse search is first performed over $\kappa_{\mathrm{loc}}\in(0,1)$ to 
provide a conservative initialization by locating a 
promising entropy valley, within which Fibonacci search is then applied for 
local refinement\cite{liMatchedFilteringbasedChannel2026a}. 

\begin{rmk}\label{rmk2}
As observed from Fig.~\ref{entropy}, $H(\kappa_{\mathrm{loc}})$ shows an 
approximately unit-periodic and regular profile. Although it is not strictly unimodal over \((0,1)\), the entropy minimum associated with the true fractional Doppler shift typically lies in a local valley. Hence, $N_{\mathrm{c}}$ uniformly spaced coarse samples 
are first evaluated over $(0,1)$, and the coarse minimizer is obtained as
\begin{equation}
	i^\star=\arg\min_i H(\kappa_{\mathrm{loc}}^{(i)}).
\end{equation}

A local refinement interval is then initialized as
\begin{equation}
[\kappa_{\mathrm{L}},\kappa_{\mathrm{U}}]
=
[\kappa_{\mathrm{loc}}^{(i^\star)}-\Delta_{\mathrm{c}},
\kappa_{\mathrm{loc}}^{(i^\star)}+\Delta_{\mathrm{c}}],
\end{equation}
where $\Delta_{\mathrm{c}}$ is chosen according to the coarse-grid spacing. In practice, it can be simply set as 
$\Delta_{\mathrm{c}}=1/N_{\mathrm{c}}$, which covers one coarse grid spacing 
around the selected coarse minimum and provides satisfactory performance in the considered scenarios.  Within this interval, Fibonacci search is applied for refinement. Let $F_j$ 
denote the $j$-th Fibonacci number. The two interior points are given by
\begin{equation}
\kappa_1=\kappa_{\mathrm{L}}
+\frac{F_{j-2}}{F_j}
(\kappa_{\mathrm{U}}-\kappa_{\mathrm{L}}), 
\kappa_2=\kappa_{\mathrm{L}}
+\frac{F_{j-1}}{F_j}
(\kappa_{\mathrm{U}}-\kappa_{\mathrm{L}}).
\end{equation}

At each iteration, the subinterval that cannot contain the smaller entropy value is discarded until the stopping criterion is met. In this way, Fibonacci search is used as a 
local refinement tool after coarse localization, without requiring strict 
global unimodality of $H(\kappa_{\mathrm{loc}})$ over $(0,1)$.
\end{rmk}
\subsubsection{Closed-form Estimation for Fractional Delay}
Consider the signal after ideal fractional Doppler compensation. In this case, we have \(K=k\), and \(\Upsilon(m,m')\) equals \(C\) only when \(C\iota+\Delta=pC\) for \(p\in\mathbb{Z}\), while approaching zero at other grid points. Solving this yields
\begin{equation}
	\frac{\Delta}{C}=p-\iota,
\end{equation}
 Substituting into $\varTheta \left( m,m^{\prime} \right)$, at the non-zero valid sampling points, the absolute amplitude of the signal can be expressed as
\begin{equation}
	\label{A_p}
	A_p = N \cdot |\mathrm{sinc}(\iota - p)|.
\end{equation}

Since the fractional delay satisfies \(\iota\in(-0.5,0.5)\), the global peak appears at \(p=0\). Let \(m_0\) denote the corresponding value of \(m\), with the amplitude given by
\begin{equation}
	\label{A_0}
	A_0 = N \cdot |\mathrm{sinc}(\iota)|.
\end{equation}

The amplitudes at the sampling points offset by one unit from the peak, i.e., \(|p|=1\), are given by
\begin{equation}
	\label{A_1}
	A_{1} = N \cdot |\mathrm{sinc}(1 - \iota)|,
\end{equation}
\begin{equation}
	\label{A_-1}
	A_{-1} = N \cdot |\mathrm{sinc}(1 + \iota)|.
\end{equation}

When $\iota > 0$, the secondary peak appears at $p=1$ (i.e., $m=m_0-C$). At this point, the main peak and the secondary peak satisfy an analytical relationship:
\begin{equation}
	\label{iota_A}
	\frac{A_{1}}{A_0 + A_{1}} = \left|\frac{ \frac{\sin(\pi \iota)}{\pi (1-\iota)} }{ \frac{\sin(\pi \iota)}{\pi \iota} + \frac{\sin(\pi \iota)}{\pi (1-\iota)} }\right| = \left|\frac{ \frac{1}{1-\iota} }{ \frac{1}{\iota} + \frac{1}{1-\iota} }\right| = \iota.
\end{equation}

The case for $\iota < 0$ is analogous. Summarizing, we obtain:
\begin{equation}
	\label{iota}
	\hat{\iota}=\begin{cases}
		\frac{A_{1}}{A_{0}+A_{1}}, & A_{1}>A_{-1},\\
		-\frac{A_{-1}}{A_{0}+A_{-1}}, & A_{-1}>A_{1}.
	\end{cases}
\end{equation}
\begin{rmk}
In implementations under noisy conditions, a direct comparison between $A_1$ and $A_{-1}$ may be unreliable when $|\iota|$ is close to zero. Therefore, a dead-zone-based sign-decision rule is adopted: for a predefined small threshold $\eta_0 \in (0,1)$, if $\eta = |A_1 - A_{-1}|/A_0 < \eta_0$, the received signal is correlated with two reference patterns, as in an MF-like estimator, corresponding to positive and negative fractional-delay hypotheses, respectively, and the sign of \(\hat{\iota}\) is determined by the larger correlation value. This operation is used only for sign disambiguation, while the magnitude of $\hat{\iota}$ is still obtained from \eqref{iota}, with negligible additional computational complexity.
\end{rmk}

\subsection{Complexity Analysis}
The computational complexity of the proposed algorithm is analyzed in two parts. 
For fractional Doppler estimation, the coarse search with \(N_c\) uniformly spaced samples requires \(N_c\) entropy evaluations. The subsequent Fibonacci search requires approximately \(I=\lceil \log_{\phi}(L_0/\epsilon_{\rm prec})\rceil\) iterations, where 
$\phi=(1+\sqrt{5})/2$ is the golden ratio, $\epsilon_{\mathrm{prec}}$ is the 
desired precision, and $L_0=2\Delta_{\mathrm{c}}$ denotes the initial refinement 
interval length after coarse localization. Each evaluation involves (i) Doppler 
compensation with $N+n_{\mathrm{cp}}$ complex multiplications, (ii) DAFT 
demodulation with one size-$N$ FFT and $2N$ complex multiplications, and 
(iii) entropy computation with $\mathcal{O}(N)$ real operations. For fractional delay estimation, the 
closed-form ratio method requires only $\mathcal{O}(N+2\delta N\log N)$ 
operations, where $\delta$ denotes the probability that the algorithm enters 
the dead zone. Therefore, the overall complexity is mainly dominated by 
approximately $(N_{\mathrm{c}}+2I+2\delta)$ size-$N$ FFT-equivalent operations, 
yielding $\mathcal{O}((N_{\mathrm{c}}+2I+2\delta)N\log N)$.
\section{Simulation Results}
This section validates the proposed algorithm via Monte Carlo simulations, with the main parameters summarized in Table~\ref{tab:sim_params}. The NTN-TDL-C model~\cite{3gpp_tr38811} is adopted for the LEO-ICAN downlink and implemented in the equivalent baseband domain using fractional-delay filters~\cite{tseFundamentalsWirelessCommunication2005}, where a random normalized residual timing offset \(L_{\rm res}\) accounts for synchronization imperfections. The baselines include MF~\cite{bemaniIntegratedSensingCommunications2024}, MF with generalized Fibonacci search (MF-GFS)~\cite{liMatchedFilteringbasedChannel2026a}, off-grid SBL(OG-SBL)~\cite{yangAFDMOffgridChannel2024}, SPA~\cite{yinPilotAidedChannel2022}, and the root Cram\'er--Rao lower bound (RCRLB)~\cite{zhangAFDMenabledIntegratedSensing2025}. For each SNR, the RMSE over \(E=1000\) independent trials is computed as
\begin{equation}
\operatorname{RMSE}(x)=
\sqrt{\frac{1}{E}\sum_{i=1}^{E}
	\left|\hat{x}^{(i)}-x^{(i)}\right|^2},
\quad x\in\{K,L\}.
\end{equation}

For the model validation in Fig.~7(b), each trial randomly draws a delay--Doppler pair as $L \sim \mathcal{U}[0,l_{\max}], K \sim \mathcal{U}[-k_{\max},k_{\max}].$
Given the same transmitted AFDM frame $\mathbf{x}$, the normalized envelope mismatch averaged over trials, denoted by NEMA, is defined as
\begin{equation}
	{\rm NEMA}(H_{\rm test},H_{\rm ref})
	=
	\frac{1}{E}\sum_{i=1}^{E}
	\frac{
		\left\|
		\left|\mathbf{H}_{{\rm eff},{\rm test}}^{(i)}\mathbf{x}\right|
		-
		\left| \mathbf{H}_{{\rm eff},{\rm ref}}^{(i)}\mathbf{x}\right|
		\right\|_2^2
	}{
		\left\|
		\left| \mathbf{H}_{{\rm eff},{\rm ref}}^{(i)}\mathbf{x}\right|
		\right\|_2^2
	}.
\end{equation}

\begin{table}[!h]
	\centering
	\caption{Simulation Parameters}
	\label{tab:sim_params}
	\footnotesize
	\setlength{\tabcolsep}{0.3pt}
	\renewcommand{\arraystretch}{1.05}
	\setlength{\arrayrulewidth}{0.4pt}
	\begin{tabular}{|c|c|c|c|}
		\hline
		\textbf{Parameter} & \textbf{Value} 
		& \textbf{Parameter} & \textbf{Value} \\
		\hline
		Subcarrier Number 
		& $N=256$ 
		& Subcarrier Spacing 
		& $\Delta f=30$ kHz \\
		\hline
		Carrier Frequency 
		& $f_c=2$ GHz 
		& Search Tolerance 
		& $\epsilon_{\text{prec}}=10^{-5}$ \\
		\hline
		Chirp Parameter 2 
		& $c_2=\sqrt{2}$ 
		& Max Normalized Delay 
		& $l_{\text{max}}=3$ \\
		\hline
		Satellite Altitude 
		& $h=600$ km 
		& Max Normalized Doppler 
		& $k_{\text{max}}=3$ \\
		\hline
		Vehicle Speed 
		& $v=60$ km/h 
		& Residual Timing Offset 
		& $L_{\text{res}}\sim\mathcal{U}[0,l_{\text{max}}]$ \\
		\hline
		Channel Model 
		& NTN-TDL-C 
		& Elevation Angle 
		& $\theta\sim\mathcal{U}(30^{\circ},70^{\circ})$ \\
		\hline
		Dead-zone Threshold
		& $\eta_0 = 0.2$
		& Monte Carlo Trials
		& $E=1000$ \\
		\hline
		Pilot Amplitude
		& $x_{\text{pilot}} = 100$
		& Coarse Search Samples
		& $N_c=4$ \\
		\hline
	\end{tabular}
\end{table}
\begin{figure}[!h]
	\centering
	\includegraphics[width=3.5in]{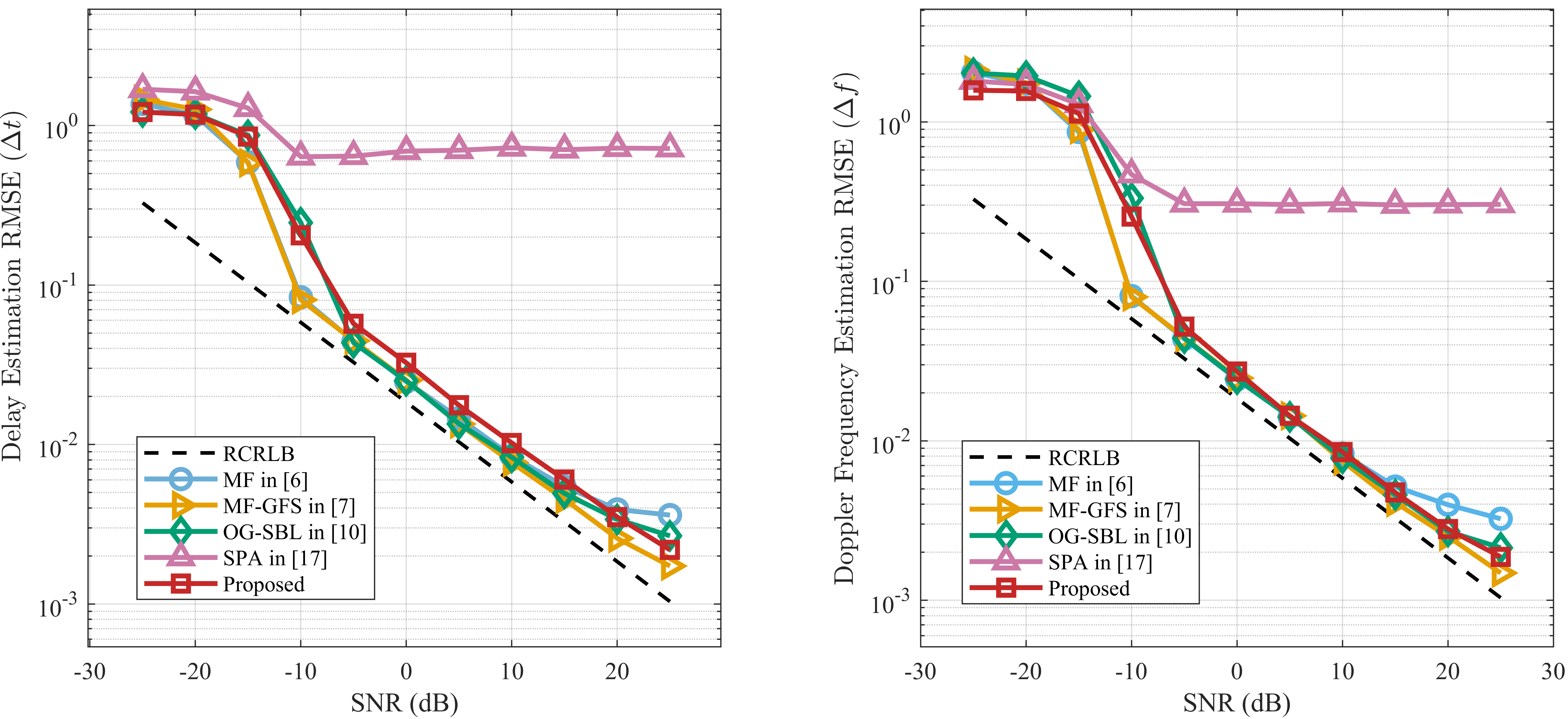}
	\caption{RMSE performance comparison for delay and Doppler frequency estimation.}
	\label{RMSE}
	\vspace{-10pt}
\end{figure}

\begin{figure}[!h]
	\centering
	\includegraphics[width=3.5in]{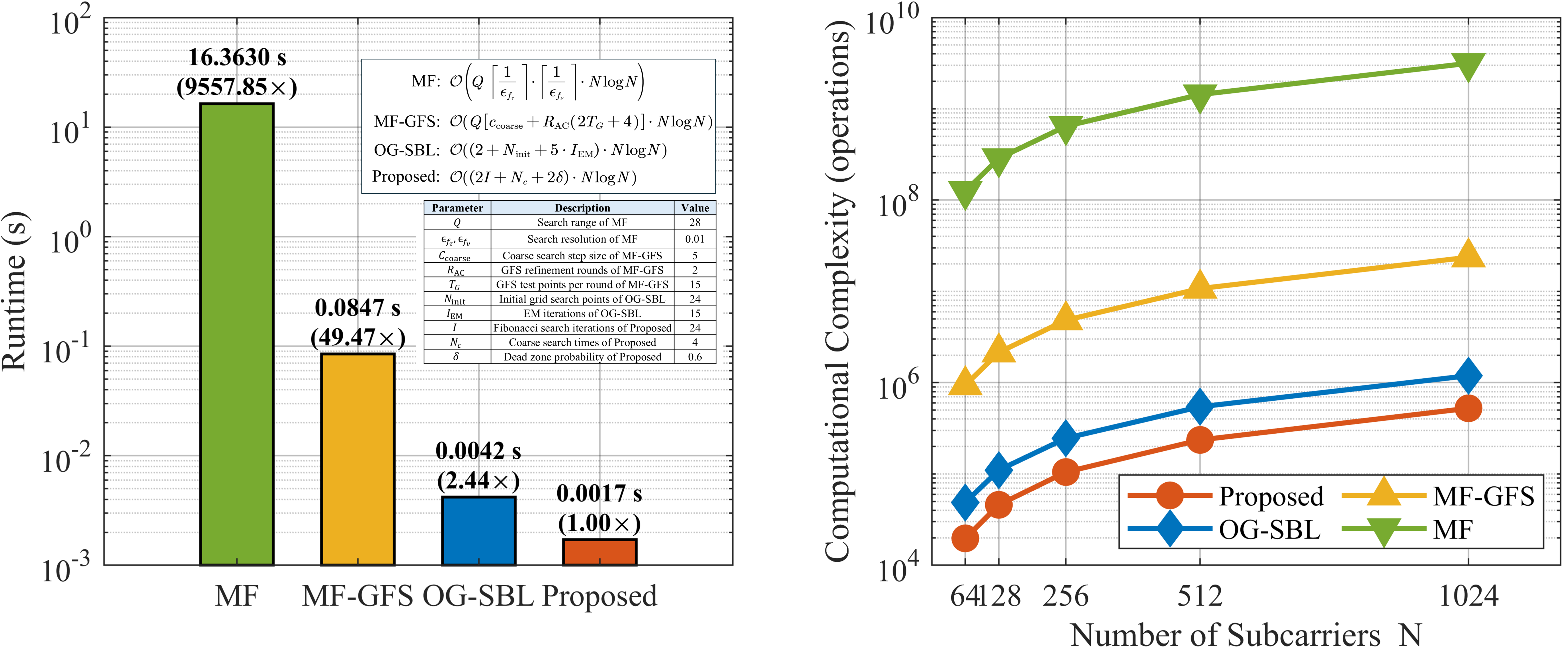}
	\caption{Comparison of runtime and computational complexity across different algorithms.}
	\label{complexity}
	\vspace{-10pt}
\end{figure}

\begin{figure}[!h]
	\centering
	\includegraphics[width=3.5in]{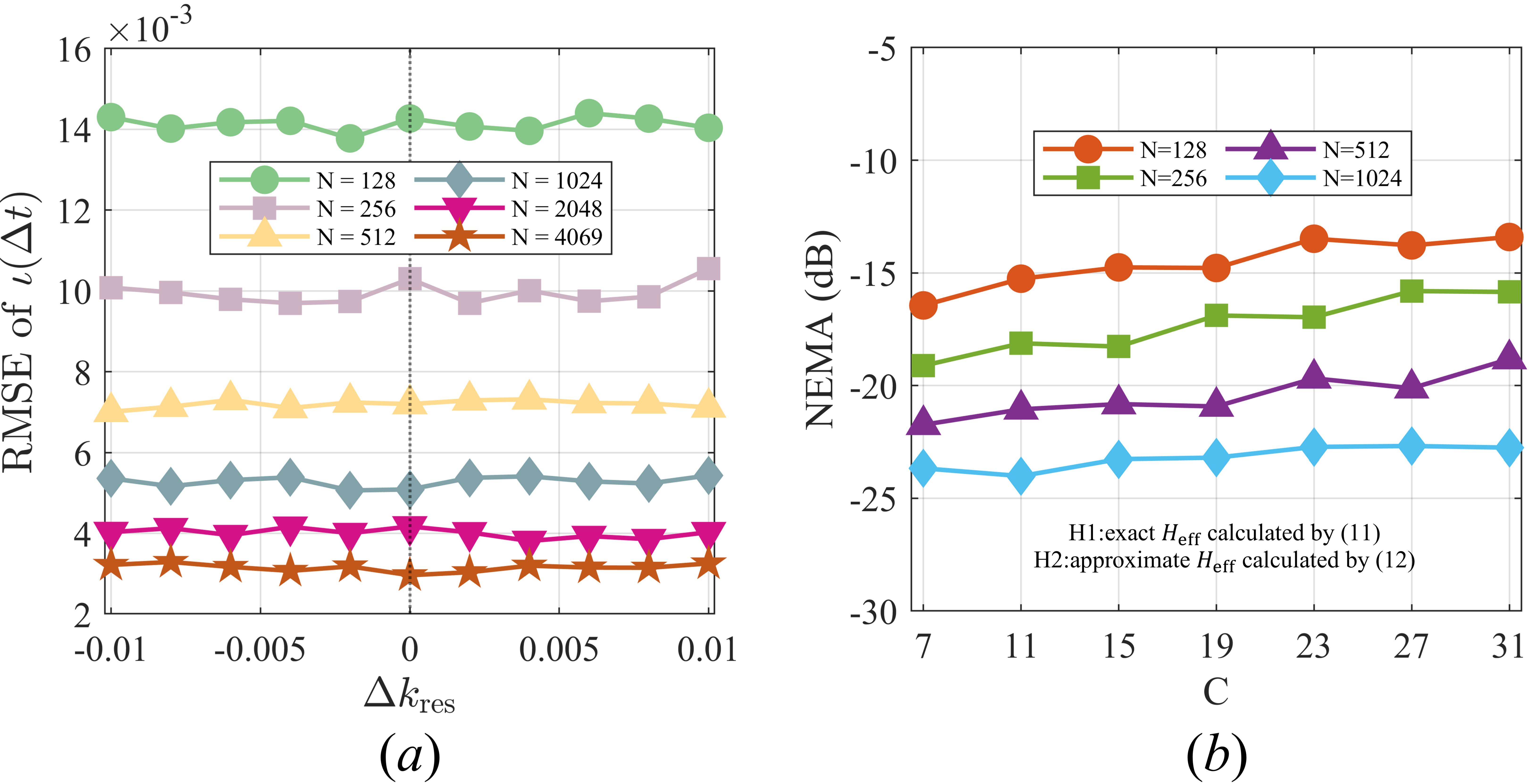}
	\caption{Robustness analysis: (a) Delay-estimation RMSE versus residual Doppler-offset error for different values of $N$ (under $\mathrm{SNR}=10$~dB); (b) NEMA of the proposed approximation (H2) versus the exact reference (H1) over \(C\) and \(N\).}
	\label{Robustness}
	\vspace{-10pt}
\end{figure}

Fig.~\ref{RMSE} compares the RMSE performance of delay and Doppler estimation versus SNR. The proposed method achieves comparable accuracy to benchmark schemes (MF, MF-GFS, OG-SBL), with RMSE decreasing steadily as SNR increases and approaching the RCRLB at high SNR. Unlike SPA, which exhibits an error floor, the proposed method maintains continuous improvement across the entire SNR range. At 25 dB, the delay RMSE reaches approximately \(0.002\Delta t\), and the Doppler RMSE reaches approximately \(0.0019\Delta f\), closely matching the best benchmarks. Fig.~\ref{complexity} shows that at $N = 256$, the proposed method requires only $0.0017~\mathrm{s}$ in a Python-based numerical implementation, substantially faster than MF, MF-GFS, and OG-SBL, while avoiding the SPA error floor. The right panel confirms its $\mathcal{O}(N\log N)$ scaling.

Fig.~\ref{Robustness} summarizes the robustness and model-validation results. In Fig.~\ref{Robustness}(a), the delay estimator exhibits strong robustness against residual Doppler error $\Delta k_{\rm res}$, with the RMSE remaining low and steady over the considered range. Notably, the RMSE decreases consistently as $N$ increases, indicating improved delay-estimation accuracy with larger $N$. Fig.~\ref{Robustness}(b) validates approximation $\mathrm{H2}$ in~(12) by evaluating the NEMA between $\mathrm{H2}$ and the exact reference $\mathrm{H1}$ in~(11). The NEMA  stays within $-24$ to $-13$~dB for different $C$ and $N$, and generally decreases as $N$ grows, confirming the accuracy of the approximate envelope structure.

\section{Conclusion}
This paper has proposed a low-complexity joint fractional delay--Doppler estimator for AFDM-enabled vehicular LEO-ICAN systems. By accounting for spectrum-wrapping, we derived the fractional AFDM input--output relationship and characterized the DAFT-domain envelope structure. We then developed a minimum-entropy-based fractional Doppler compensation method and a closed-form fractional delay estimator. Simulations confirm that the proposed estimator approaches the RCRLB while significantly reducing runtime compared with MF, MF-GFS, and OG-SBL. Future work will focus on severe multipath or NLOS environments.

\appendix

\subsection{Proof of Theorem 1}
\begin{proof}[Proof]
	Let $\Delta\triangleq m'-(m+l^{\mathrm{eq}})$ with $l^{\mathrm{eq}}=(K+CL)_N$. The proof proceeds in four approximation steps, each associated with one condition in Theorem 1.
	
	\emph{Step 1 (Compact-boundary replacement).} Each segment $\mathcal{L}_{m',q}$ has exact boundary $n_q^{\mathrm{start}}=(\lfloor l+\iota+(qN-m')/C\rfloor)_N+1$. Replacing $\lfloor l+\iota+Nq/C-m'/C\rfloor$ by $l+Nq/C+\lfloor\iota-m'/C\rfloor$ perturbs each boundary by $\mathcal{O}(1)$ sample, hence $\mathcal{O}(C)$ samples in total, yielding a relative error $\mathcal{O}(C/N)$. Under the operating regime $l+\iota\in[0,N)$, cyclic folding is triggered only at $q\in\{0,C\}$, and (\ref{F_p}) admits the piecewise form
	\begin{equation}	
		\begin{split} 
			\label{summation_of_F}  
			\mathcal{F}(m,m')\!&\approx\!\!\!\!\!\!\sum_{n=l+1}^{\left(l+\frac{N}{C}+\lfloor\iota-\frac{m'}{C}\rfloor\right)_N}\!\!\!\!\!\!\!\!e^{\imath\frac{2\pi}{N}\Delta n}\\
			&\!+\!\!\sum_{q=1}^{C-1}\!\sum_{n=\left(\frac{Nq}{C}+l+\lfloor\iota-\frac{m'}{C}\rfloor\right)_N+1}^{\left(l+\frac{N(q+1)}{C}+\lfloor\iota-\frac{m'}{C}\rfloor\right)_N}\!\!\!\!\!\!\!\!\!\!\!\!e^{\imath\frac{2\pi}{N}\Delta n}e^{\imath 2\pi\iota q}\\
			&+\!\!\!\!\sum_{n=\left(l+\lfloor\iota-\frac{m'}{C}\rfloor\right)_N+1}^{l}\!\!\!\!\!\!\!\!e^{\imath\frac{2\pi}{N}\Delta n}e^{\imath 2\pi\iota C}.
		\end{split}
	\end{equation}
	
	\emph{Step 2 (Removing the modulo-$N$).} The modulo operation in (\ref{summation_of_F}) is equivalent to a left cyclic shift on $\mathbb{Z}_N$ by $l+1$ positions, so the number of folded samples satisfies
	\begin{equation}
		\#\{\text{folded samples}\}\le l+1\le l_{\max}+1=\mathcal{O}(l_{\max}),
	\end{equation}
	independent of $m'$ and $\iota$. For each folded sample with unfolded index $\hat n=n+N$:
	\begin{equation}
		\label{sum_element}
		e^{\imath\frac{2\pi}{N}\Delta\hat n}=e^{\imath\frac{2\pi}{N}\Delta n}\,e^{-\imath 2\pi\Delta}=e^{\imath\frac{2\pi}{N}\Delta n}\,e^{\imath 2\pi(\kappa+C\iota)}.
	\end{equation}

 Since $|1-e^{\imath 2\pi(\kappa+C\iota)}|\le 2$ over $\mathcal{O}(l_{\max})$ samples, the cyclic-folding error is of order $\mathcal{O}(l_{\max})$ and relatively $\mathcal{O}(l_{\max}/N)$.
	
	\emph{Step 3 (Merging the $q=0,C$ edge segments).} After unfolding, the two spectrum-wrapping-induced edge segments become adjacent but carry distinct phase factors $1$ and $e^{\imath 2\pi C\iota}$. Letting $d\triangleq\lfloor\iota-m'/C\rfloor$, the $q=C$ segment in (\ref{summation_of_F}) spans $n=l+d+1,\dots,l$, whose cardinality is $-d$ when $d<0$ and zero otherwise, i.e.,
	\begin{equation}
		M_C(m')\triangleq\max\{-d,\,0\}=\max\!\left\{-\lfloor\iota-m'/C\rfloor,\,0\right\}.
	\end{equation}
	
	Approximating this segment with the \(q=0\) phase induces an absolute deviation $|e^{\imath 2\pi C\iota}-1|M_C(m')=2|\sin(\pi C\iota)|M_C(m')$, hence relative error $\mathcal{O}(M_C(m')/N)$. For the pilot placement $m'=0$ with $\iota\in(-0.5,0.5)$, $M_C(0)\le 1$, reducing this error to $\mathcal{O}(1/N)$.
	
	Combining Steps 1--3, (\ref{summation_of_F}) reduces to
	\begin{equation}
		\label{simplified_2}
		\mathcal{F}(m,m')\approx\!\sum_{q=0}^{C-1}\!e^{\imath 2\pi\iota q}\!\!\!\!\sum_{n=\frac{Nq}{C}+l+\lfloor\iota-\frac{m'}{C}\rfloor+1}^{l+\frac{N(q+1)}{C}+\lfloor\iota-\frac{m'}{C}\rfloor}\!\!\!\!\!\!e^{\imath\frac{2\pi}{N}\Delta n}.
	\end{equation}
	
	Evaluating the two nested geometric series and using $|1-e^{\imath 2\pi x}|=2|\sin(\pi x)|$ yields
	\begin{equation}
		\label{pre_sinc}
		\left|\mathcal{F}(m,m')\right|\approx\left|\frac{\sin\!\left(\pi(C\iota+\Delta)\right)}{\sin\!\left(\tfrac{\pi}{C}(C\iota+\Delta)\right)}\right|\left|\frac{\sin\!\left(\tfrac{\pi}{C}\Delta\right)}{\sin\!\left(\tfrac{\pi}{N}\Delta\right)}\right|.
	\end{equation}
	
	\emph{Step 4 (Sinc linearization).} In the main-lobe region, $\sin(\pi\Delta/N)=\pi\Delta/N+\mathcal{O}(\Delta^3/N^3)$, so the envelope factor in (\ref{pre_sinc}) becomes $\tfrac{N}{C}|\mathrm{sinc}(\Delta/C)|$ with relative error $\mathcal{O}(\Delta^2/N^2)$, completing the proof of (\ref{magnitude}).
\end{proof}
\bibliographystyle{IEEEtran}
\bibliography{reference}

\end{document}